\DeclareMathOperator{\ad}{ad}
\DeclareMathOperator{\curl}{curl}
\DeclareMathOperator{\diff}{d}
\newtheorem{theorem}{Theorem}
\newtheorem{corollary}[theorem]{Corollary}
\newtheorem{definition}[theorem]{Definition}
\newtheorem{example}[theorem]{Example}
\newtheorem{proposition}[theorem]{Proposition}
\newtheorem{remark}[theorem]{Remark}
\def\MM#1{\boldsymbol{#1}}
\newcommand{\pp}[2]{\frac{\partial #1}{\partial #2}} 
\newcommand{\dede}[2]{\frac{\delta #1}{\delta #2}}
\newcommand{\dd}[2]{\frac{\diff #1}{\diff #2}}
\newcommand{\bfi}[1]{{\bfseries\itshape #1}}
\newcommand{\rem}[1]{}
\def\contract{\makebox[1.2em][c]{\mbox{\rule{.6em}
{.01truein}\rule{.01truein}{.6em}}}}
\begin{document}
\title{On Noether's theorem for the Euler-Poincar\'e equation on the
  diffeomorphism group with advected quantities} 
\author{
Colin J. Cotter$^{1}$ and Darryl D. Holm$^{2}$
}
\addtocounter{footnote}{1}
\footnotetext{Aeronautics Department, Imperial College, London SW7 2AZ, UK.  
\texttt{colin.cotter@ic.ac.uk}
\addtocounter{footnote}{1} }
\footnotetext{Department of Mathematics, Imperial College, London SW7 2AZ, UK. 
\texttt{d.holm@ic.ac.uk}
\addtocounter{footnote}{1}}
  \date{
AMS Classification: 37K05
\\Keywords: Hamiltonian structures, symmetries, variational principles, conservation laws
}
\maketitle

\centerline{\it In honor of Peter Olver's 60-th birthday}

\begin{abstract}

  We show how Noether conservation laws can be obtained from the
  particle relabelling symmetries in the Euler-Poincar\'e theory of
  ideal fluids with advected quantities. All calculations can be
  performed without Lagrangian variables, by using the Eulerian vector
  fields that generate the symmetries, and we identify the
  time-evolution equation that these vector fields satisfy.  When
  advected quantities (such as advected scalars or densities) are
  present, there is an additional constraint that the vector fields
  must leave the advected quantities invariant. We show that if this
  constraint is satisfied initially then it will be satisfied for all
  times. We then show how to solve these constraint equations in 
  various examples to obtain evolution equations from the conservation
  laws.  We also discuss some fluid conservation laws in the
  Euler-Poincar\'e theory that do not arise from Noether symmetries,
  and explain the relationship between the conservation laws obtained
  here, and the Kelvin-Noether theorem given in Section 4 of Holm,
  Marsden and Ratiu, {\it Adv. in Math.}, 1998.

\end{abstract}

\tableofcontents

\section{Introduction} \label{sec-intro}

As Noether did in her famous paper \cite{Noether1918}, we are dealing
with invariant variational principles. This subject has a vast
literature and has been a favorite topic for Peter Olver, to which he
returned many times \cite{BeOl1982,Ol1984a,Ol1984b,Ol1984c,Ol1986,Olver-book,MuRoOl2006}.

A Lie group transformation that leaves the Lagrangian invariant in 
Hamilton's principle is called a \emph{variational Lie symmetry}. The correspondence between variational Lie symmetries and conservation laws for Euler-Lagrange equations is completely determined by the  Noether's First Theorem \cite{Ko-Sc2004,Noether1918,Olver-book}. Namely, every variational Lie symmetry yields a conservation law.%
%%%%%%%%%%%%%%%%%%%%%%%%%%%%%%%%%%
\footnote{Noether's celebrated paper \cite{Noether1918} contains two major theorems. The present paper discusses only the first of these theorems. For good discussions of the second Noether theorem, see e.g. \cite{Noether1918,Olver-book,BaCaJa1994,Ko-Sc2004,HyMa2011}.}
%%%%%%%%%%%%%%%%%%%%%%%%%%%%%%%%%%
Our main goal here is to identify explicitly in terms of \emph{Eulerian observables} the vector fields of the relabelling symmetry transformations under the Lie group $G$ of smooth invertible maps that are responsible for some of the well-known conservation laws in the Euler-Poincar\'e theory of fluids with advected quantities \cite{HoMaRa1998}. In particular, we treat a few hands-on examples in fluid dynamics that recover some famous formulas such as helicity of fluids, Ertel's potential vorticity in geophysical fluid dynamics
(GFD) and Chandrasekhar's cross-helicity for magnetohydrodynamics
(MHD).  We also discuss the relation of the classical Noether's Theorem with the Kelvin-Noether circulation theorem from the Euler-Poincar\'e theory of ideal fluids with advected quantities in \cite{HoMaRa1998}.  In addition, we discuss conservation laws in the Euler-Poincar\'e theory that do not arise from Noether symmetries.
Finally, we discuss some applications of Noether's Theorem in image registration problems. 

It seems that every theoretical physicist and many mathematicians eventually feel compelled to write a paper about Noether's Theorem. Previous influential papers along similar lines about Noether's Theorem in fluid dynamics related to the directions taken here include \cite{So1976,De1978,Si1985,Ho1986,AbHo1987,PaMo1996a,PaMo1996b,Br2010,HyMa2011} and of course references therein.
\\

The main content of the paper is:
\begin{enumerate}
\item Section \ref{sec-formulateEP} briefly summarises the
  Euler-Poincar\'e formulation of ideal fluid dynamics with advected
  quantities. In particular, we summarise several simple but useful
  theorems that are available for studying how the Noether theorem
  associates variational Lie symmetries with conservation laws for fluids.
\item Section \ref{sec-examples} uses these theorems in a sequence of
  examples that derive several of the most well-known conservation
  laws for ideal fluids in the Euler-Poincar\'e formulation \cite{HoMaRa1998}.  
\item
Section \ref{nonNoetherCLs} points out that not all fluid conservation laws follow from Noether's theorem, by considering the counterexample of magnetic helicity for MHD. It also makes a connection between Noether's theorem as discussed in this paper, and the Kelvin-Noether circulation theorem discussed in \cite{HoMaRa1998}.
\item 
  Section \ref{sec-conclude} discusses some numerical issues and applications of these ideas outside of fluid dynamics. Section \ref{sec-conclude} also raises topics for future research inspired by Lie symmetries and Noether's theorem.
\end{enumerate}

\section{Formulation} \label{sec-formulateEP}

We begin by 
laying out the assumptions that underlie the Euler-Poincar\'e formulation. These are the following.
\begin{enumerate}
\item There is a right representation of the action of a Lie group
  $G$ on its tangent space $TG$ and on the vector space $\mathcal{V}$. The
  action on $TG\times \mathcal{V}$ is denoted by concatenation on the right, as
  $(v_g , a) h= (v_gh, ah)$ for $g,h\in G$.
\item The Lagrangian $L: TG \times \mathcal{V}\to \mathbb{R}$ is right
  $G$-invariant.
\item In particular, if $a_0 \in \mathcal{V}$, define the Lagrangian $L_{a_0} :
  TG\to \mathbb{R}$ by $L_{a_0}(v_g)=L(v_g , a_0)$. Then $L_{a_0}$ is
  right-invariant under the lift to $TG$ of the right action of
  $G_{a_0}$ on $G$, where $G_{a_0}$ is the isotropy group of $a_0$.
\item Right $G$-invariance of the Lagrangian $L$ permits us to define
  a \emph{reduced Lagrangian} $l : \mathfrak{g}\times \mathcal{V}\to
  \mathbb{R}$ by
\begin{equation}\label{red-lag}
l(v_g g^{-1}, a_0 g^{-1})=L(v_g , a_0).
\end{equation}
Conversely, this relation defines for any 
$l : \mathfrak{g}\times \mathcal{V}\to\mathbb{R}$ a right $G$-invariant
function $L: TG \times \mathcal{V}\to \mathbb{R}$.
\item
For a curve $g(t) \in G$, let $u(t) :=\dot{g}(t) g(t)^{-1}$ and define the curve $a(t)\in{\mathcal{V}}$ obtained from the action $G\times \mathcal{V}\to \mathcal{V}$ as the unique solution of the linear differential equation with time dependent coefficients 
\begin{equation}\label{advect1}
\left(\partial_t + \mathcal{L}_{u(t)}\right)a(t) = 0
\,,
\end{equation}
with initial condition $a(0)=a_0$ and Lie derivative $\mathcal{L}_{u(t)}$.
\end{enumerate}

For fluids, the Lie group $G={\rm Diff}(\mathbb{R}^3)$ is the group of
diffeomorphisms of three-dimensional space. This is the Lie group of
smooth invertible maps defined on $\mathbb{R}^3$ and with smooth
inverses.% 
\footnote{
Strictly speaking, $G={\rm Diff}(\mathbb{R}^3)$ denotes the connected component at the identity
of the diffeomorphisms.
Its Lie algebra comprises the right-invariant vector fields on $\mathbb{R}^3$, denoted as $\mathfrak{X}(\mathbb{R}^3)$.}
At time $t$, the curve $g(t)$ defines the mapping from a reference
configuration (known as \emph{label space}) to the physical domain so
that $x(t)=g(t)x_0$ describes Lagrangian particle trajectories for
each label $x_0$. 

\begin{definition}\rm
The solution $a(t)=a_0 g(t)^{-1}$ 
of equation (\ref{advect1}) is called an
\emph{advected quantity} for fluids, and the right-invariant vector field
$u(t) :=\dot{g}g(t)^{-1}\in\mathfrak{X}(\mathbb{R}^3)$ is called the \emph{Eulerian, or
spatial, fluid velocity}. 
\end{definition}

\begin{remark}\rm
Examples of advected quantities include the extensive thermodynamic properties that are carried by fluid elements such as their heat and mass. Equation (\ref{advect1}) means physically that along the flow $g(t)$ of the vector field $u(t)$ the fluid elements are to be regarded as closed thermodynamic systems that do not 
exchange heat and mass with their neighbours.

Some particular examples of advected quantities that we discuss in this paper are:
\begin{enumerate}
\item Scalar fields (0-forms) $a(t)=s$ that satisfy:
\[
\left(\partial_t + \mathcal{L}_{u(t)}\right)s
= \left(\partial_t + \MM{u}\cdot\nabla\right)s=0.
\]
In geophysical models scalar advected quantities includes buoyancy due to heat and salinity.
\item Density fields (volume forms) $a(t)=\rho dV$ that satisfy
\[
\left(\partial_t + \mathcal{L}_{u(t)}\right)\rho dV
= \left(\partial_t\rho + \nabla\cdot(\MM{u}\rho)\right)\diff{V}=0.
\]
This type of advected quantity is used for the fluid density, or layer
depth in shallow water models.
\item Flux fields (2-forms) $a(t)=\MM{B}\cdot\diff{\MM{S}}$ that
  satisfy
\[
\left(\partial_t + \mathcal{L}_{u(t)}\right)\MM{B}\cdot\diff{\MM{S}}
= \left(\partial_t\MM{B} -
{\rm curl}(\MM{u}\times\MM{B}) 
\right)
\cdot\diff{\MM{S}}=0.
\]
This type of advected quantity is used, e.g., for the magnetic flux in
magnetohydrodynamics.
\end{enumerate}
For more discussion of advected quantities, see \cite{HoMaRa1998}.
The back-to-labels map that specifies the label of the fluid parcel currently at a given spatial position would also be an advected quantity. However, in this paper, we shall restrict ourselves to dealing only with \emph{Eulerian observables}, and the particle label is not observable at any given Eulerian point in a fluid flow.
\end{remark}

\subsection{Euler-Poincar\'e theorem with advected quantities}

Here we review the approach presented in \cite{HoMaRa1998} to
obtaining the variational equation of motion, known as the
Euler-Poincar\'e equation, for general reduced Lagrangians $l(u,a)$
with advected quantities.

Hamilton's principle 
$\delta S=0$ for $S=\int l(u,a)\,dt$
with the reduced Lagrangian
defined in equation (\ref{red-lag}) may be expressed either abstractly as
\begin{equation}
\label{action1}
0  
=
\delta S
 = \delta \int_{t_0}^{t_1} l(u,a)\diff{t} 
=   \int_{t_0}^{t_1} \left\langle \dede{l}{u} \,,\,\delta u \right\rangle 
+  \left\langle \dede{l}{a}\,,\,\delta a \right\rangle \diff{t}
\,,\end{equation}
where angle brackets denote appropriate pairings, or equivalently in coordinates 
with Lagrangian $\int_\mathcal{D} \ell({u},a)\diff{V}$
\begin{equation}
\label{action2}
0  
=
\delta S
 =  \delta \int_{t_0}^{t_1} \int_\mathcal{D} \ell({u},a)\diff{V}\diff{t} 
  =  \int_{t_0}^{t_1} \int_\mathcal{D} \left(\dede{\ell}{{u}}\cdot\delta {u} + 
\dede{\ell}{a}\,\delta a\right) \diff{V}\diff{t},
\end{equation}
where $\mathcal{D}$ is the spatial domain with boundary $\partial\mathcal{D}$
on which the fluid velocity has no normal component; that is, ${u\cdot n}=0$. The expressions 
$\dede{l}{u}\in\mathfrak{X}^*$ and $\dede{l}{a}\in \mathcal{V}^*$ are variational
derivatives in $u$ and $a$, respectively.  
We ensure that variations in $g$ honour the boundary conditions, by defining $\delta g =
w\circ g$, in which $w$ is a vector field whose components satisfy ${w\cdot n}=0$ on $\partial\mathcal{D}$.  

In the remainder of the paper, we will find it convenient to use a hybrid notation that passes freely between the abstract notation and the more explicit coordinate notation, as in equations (\ref{action1}) and (\ref{action2}). We believe this hybrid notation, whose meaning will always be clear from the context,  will appeal to a wider readership than the abstract  notation. Conversely, we will sometimes find that the calculations we need to perform are written more directly in the abstract notation using the language of differential forms. 

The infinitesimal transformations for $u$ and $a$ are \cite{HoMaRa1998}
\begin{equation}
\delta{u} 
= {\dot w}-{\rm ad}_u w
:= {\dot w}+[u,w]\,, \quad
\delta a = -\mathcal{L}_{w}a\,.
\end{equation}
Here $a$ denotes any quantity that is advected with the flow,
\emph{e.g.} scalar tracers ${s}$, densities $\rho\diff{V}$ \emph{etc.} The
linear operator on $w$, ${\rm ad}_u$, is defined in terms of
$[u,{w}]$, which is the commutator (Lie bracket) of the vector fields
$u$ and $w$ in $\mathfrak{X}(\mathbb{R}^3)$. 
Furthermore, we seek the stationary point $\delta S=0$ in Hamilton's principle above, subject to $\delta g = 0$ at the endpoints $t=t_0$ and $t=t_1$; hence, we also require $w=\delta g\, g^{-1}$ to vanish at the endpoints.

Substitution in (\ref{action1}) now yields
\begin{equation}\label{HP1}
0 = - \int_{t_0}^{t_1}\!\! \int_{\mathcal{D}} \left(
\pp{}{t}\dede{l}{{u}}  + \ad_u^*\dede{l}{u}
 - \dede{l}{a} \diamond a
\right)\cdot{w}\diff{V} \diff{t}
+ 
\left[
\int_{\mathcal{D}}\dede{l}{{u}}\cdot{w}\diff{V}
\right]_{t_0}^{t_1},
\end{equation}
where $\ad_u^*$ is the dual operator to $\ad_u$ defined by
\[
\int_{\mathcal{D}} v\cdot\ad_u^*m\diff{V} = \int_{\mathcal{D}}m\cdot\ad_uv\diff{V},
\]
for all vector fields $v$, and whose explicit formula in components is
\[
\ad^*_u m = \nabla\cdot\left({u}\otimes m\right)
+ (\nabla{u})^Tm
\,.
\]
This formula also happens to match the components of the Lie derivative for one-form densities \cite{HoMaRa1998},
\[
\mathcal{L}_u(m\cdot\diff{x}\otimes\diff{V})
= \left(\nabla\cdot\left({u}\otimes m\right)
+ (\nabla{u})^Tm\right)\cdot\diff{x}\otimes\diff{V},
\]
with line element $\diff{x}$ and volume element $\diff{V}$.
Notation for the diamond operation $(\,\diamond\,)$ has also been introduced in equation (\ref{HP1}). The diamond operation is defined by
\begin{equation}\label{diamond-def}
\int_{\mathcal{D}}  \left(\dede{l}{a} \diamond a \right)\cdot{w}\diff{V} 
:=
\int_{\mathcal{D}}  \dede{l}{a} \cdot \left( - \mathcal{L}_w{a} \right) \diff{V} 
\,.
\end{equation}
Vanishing of the first term in (\ref{HP1}) for variations
that are otherwise arbitrary now produces the Euler-Poincar\'e (EP)
equation,

\begin{equation}\label{EP1}
\pp{}{t}\dede{l}{{u}} + \nabla\cdot\left({u}\otimes\dede{l}{{u}}\right)
+ (\nabla{u})^T\dede{l}{u} - \dede{l}{a} \diamond a
=
0
\,.
\end{equation}
%%% REMOVED THIS AS IT WAS GETTING REPETITIVE
% or, in a slightly more compact form,
% \begin{equation}\label{EP3}
% %
% \frac{\partial }{\partial t}\frac{\delta l}{\delta {u}}
% + \pounds_{u} \frac{\delta l}{\delta {u}}
% - \frac{\delta l}{\delta a}\diamond a
% =
% 0
% \,.
% \end{equation}

The EP equation in (\ref{EP1}) is completed as an evolutionary system by including the equation of motion (\ref{advect1}) for the advected quantities, $a$.
\paragraph{Noether's Theorem for Euler-Poincar\'e with advected quantities} 
We consider symmetries of the action $S=\int l(u,a)\,dt$ that are
obtained by infinitesimal transformations of the form $\delta g = \eta
\circ g$ for a vector field $\eta$. Consequently, we have the infinitesimal transformations
\begin{equation}\label{inf-trans}
\delta{u} = {\dot \eta}+[u,\eta]\,, \quad
\delta a = -\mathcal{L}_{\eta}a\,.
\end{equation}
If the vector field ${\eta}$ generates symmetries of
the Lagrangian, then Hamilton's principle $\delta S=0$ implies that
\begin{equation}
0 = - \int_{t_0}^{t_1}\!\! \int_{\mathcal{D}} \underbrace{\left(
\pp{}{t}\dede{l}{{u}} + \nabla\cdot\left({u}\otimes\dede{l}{{u}}\right)
+ (\nabla{u})^T\dede{l}{u} - \dede{l}{a} \diamond a
\right)}_{\hbox{= 0}}\cdot\,{\eta}\diff{V} \diff{t}
+ 
\left[
\int_{\mathcal{D}}\dede{l}{{u}}\cdot{\eta}\diff{V}
\right]_{t_0}^{t_1}.
\end{equation}
Here, the term in the time integral vanishes for solutions of the
Euler-Poincar\'e equations, and we are left with the endpoint terms
for arbitrary ${t_0}$ and ${t_1}$. This implies Noether's theorem. 
Namely, a conservation law is associated with each vector field
${\eta}$ that generates a symmetry of the Lagrangian
\cite{Noether1918}. These considerations prove the following.
\begin{theorem}[Noether theorem for EP]
\label{noether ep} 
  Each symmetry vector field $\eta$ of the EP Lagrangian (\ref{action1}) 
  for infinitesimal transformations given by (\ref{inf-trans}) corresponds to
  an integral of the EP motion equation (\ref{EP1}) satisfying
\begin{equation}
\label{e:noether}
\dd{}{t}\int_{\mathcal{D}}\dede{l}{{u}}\cdot{\eta}\diff{V}=0\,,
\end{equation}
for an appropriate inner product. 
\end{theorem}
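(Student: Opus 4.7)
The plan is to replay the variational computation that yielded the Euler-Poincar\'e equation (\ref{EP1}), but with the test vector field $w$ replaced by the symmetry-generating vector field $\eta$ and \emph{without} imposing that $\eta$ vanish at the temporal endpoints. The hypothesis that $\eta$ generates a symmetry of the Lagrangian says that $\delta S = 0$ automatically under the induced variations $\delta u = \dot\eta + [u,\eta]$ and $\delta a = -\mathcal{L}_\eta a$ of (\ref{inf-trans}), regardless of whether $u$ satisfies the equations of motion. This symmetry hypothesis is the only new ingredient beyond the derivation already carried out in (\ref{HP1}).

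I would begin by substituting (\ref{inf-trans}) into $\delta S = \int_{t_0}^{t_1}\bigl(\langle \dede{l}{u},\delta u\rangle + \langle \dede{l}{a},\delta a\rangle\bigr)\diff{t}$. For $\delta u$, the $[u,\eta]$ contribution is rewritten using the definition of $\ad_u^*$, which shifts the $u$-commutator onto $\dede{l}{u}$, and the $\dot\eta$ contribution is integrated by parts in time, which is precisely the step that produces the temporal boundary terms $\bigl[\int_\mathcal{D}\dede{l}{u}\cdot\eta\diff{V}\bigr]_{t_0}^{t_1}$. For $\delta a$, the diamond definition (\ref{diamond-def}) converts $\langle\dede{l}{a},-\mathcal{L}_\eta a\rangle$ into $\langle \dede{l}{a}\diamond a,\eta\rangle$. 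Spatial boundary contributions arising in these integrations by parts drop out under the no-penetration compatibility ${\eta\cdot n}=0$ on $\partial\mathcal{D}$, which is the same condition imposed on $w$ in the derivation of (\ref{EP1}). The resulting identity is exactly the displayed equation immediately preceding the theorem statement: the bulk EP expression dotted with $\eta$, plus the temporal endpoint contribution.

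To finish, I would invoke (\ref{EP1}) to annihilate the bulk integrand on solutions of the Euler-Poincar\'e equations, so that the identity $\delta S = 0$ reduces to the vanishing of the endpoint expression for every admissible pair $t_0 < t_1$. Holding $t_0$ fixed and differentiating in $t_1$ then yields (\ref{e:noether}). The main conceptual obstacle is not computational but lies in justifying that the transformations (\ref{inf-trans}) genuinely encode the $\epsilon$-derivative of $l(u,a)$ along the one-parameter family obtained by composing $g$ with the flow of $\eta$, so that the phrase ``$\eta$ is a symmetry of the Lagrangian'' is equivalent to $\delta S = 0$ under precisely these variations. This follows from the reduction relation (\ref{red-lag}) together with the chain rule, but it is the geometric heart of the argument and should be verified once carefully before being used as a black box.
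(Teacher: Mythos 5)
Your proposal is correct and follows essentially the same route as the paper: it replays the variational computation leading to (\ref{HP1}) with $w$ replaced by $\eta$, uses the symmetry hypothesis to set $\delta S=0$, annihilates the bulk term via the EP equation (\ref{EP1}), and reads off the conservation law from the endpoint terms for arbitrary $t_0$ and $t_1$. The additional care you flag about verifying that (\ref{inf-trans}) genuinely encodes the variation induced by composing $g$ with the flow of $\eta$ is a sensible refinement, but it does not change the structure of the argument.
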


%%%%%%%%%%%%%%%%%%%%%%%%%%%%%%%%%%
\subsection{Relabelling symmetries}
Let us now consider how to derive the vector fields $\eta$ for the
symmetry transformations in Noether's theorem in the case of fluids in
the \emph{Eulerian} representation. These symmetry transformations are called
relabelling symmetries. They arise from the assumed right invariance
of the EP Lagrangian $l(u,a)$ under the group $G$ of 
diffeomorphisms (the Lie group of smooth invertible maps with smooth
inverses). 
The Eulerian velocity $u(t) :=\dot{g}g(t)^{-1}\in\mathfrak{X}(\mathbb{R}^3)$ is right-invariant under this
action and therefore it does not change under relabelling
transformations. 
This invariance implies the following evolution
equation for the vector field ${\eta}$:
\begin{equation}
\label{e:eta_vector_field}
\delta u
=
{\dot{\eta}} + [{u},{\eta}] = 0,
\end{equation}
where the bracket $[\,\cdot\,,\,\cdot\,]$ denotes commutation of vector fields.
If a set of advected quantities $\{a\}$ exists, then the vector fields ${\eta}$ for the symmetry transformations must also satisfy the additional conditions that
\begin{equation}
\label{e:eta_advected_condition}
\delta a
=
-\,\mathcal{L}_{{\eta}}a=0\,,
\end{equation}
for each advected quantity $a$. 
%Strictly speaking, Theorem
%\ref{noether ep} becomes Noether's Second Theorem for relabelling
%symmetries, since the infinitesimal transformations satisfy time
%evolution equations that depend on $u$.

When there are no advected quantities present (as in the case of EPDiff \cite{EPDiff}, for example) equation \eqref{e:noether} simply recovers the equation for conservation of momentum, as one sees from the following direct computation:
\begin{eqnarray*}
0 & = & \dd{}{t}\left\langle\dede{l}{{u}},{\eta}\right\rangle \\
 & = & \left\langle \dd{}{t}\dede{l}{{u}}, {\eta} \right\rangle
+ \left\langle \dede{l}{{u}}, \dd{}{t}{\eta} \right\rangle \\
 & = & \left\langle \dd{}{t}\dede{l}{{u}}, {\eta} \right\rangle
+ \left\langle \dede{l}{{u}}, -[{u},{\eta}] \right\rangle \\
 & = & \left\langle \dd{}{t}\dede{l}{{u}}, {\eta} \right\rangle
+ \left\langle \dede{l}{{u}}, \ad_{{u}}{\eta} \right\rangle \\
 & = & \left\langle \dd{}{t}\dede{l}{{u}} + \ad^*_{{u}}
\dede{l}{{u}}, {\eta} \right\rangle 
\\
& = &  
 \left\langle \left(\pp{}{t} + \mathcal{L}_{u(t)} \right)
\dede{l}{u}, {\eta} \right\rangle
.
\end{eqnarray*}
In this computation, the angle brackets $\langle\,\cdot\,,\,\cdot\,\rangle$ denote the $L^2$ pairing $\mathfrak{X}^*\times\mathfrak{X}\to \mathbb{R}$ between the vector fields and their $L^2$ duals, the 1-form densities.

\subsection{Theorems for advected quantities}

We now develop general results for the case where one
or more advected quantities are present. This requires determining 
whether all the conditions in \eqref{inf-trans} can be satisfied
simultaneously. We shall conclude that, if they are satisfied
initially, then they are satisfied for all times $t$, due to the
commutative properties of Lie derivatives. This will enable us to
derive conservation laws in various cases in the rest of the paper.

\begin{theorem}[Commutator]\label{com-thm}
For any pair of smooth time-dependent vector fields $u(t),\eta(t)\in\mathfrak{X}$ and for any $a(t)\in V$ the following commutation relation holds among Lie derivatives,
\begin{equation}
\left[\partial_t + \mathcal{L}_{u(t)}\,,\,\mathcal{L}_{\eta(t)}\right]a(t)
=
\mathcal{L}_{({\dot{\eta}} + [{u},{\eta}])}a(t)
\,.
\label{comrel}
\end{equation}
\end{theorem}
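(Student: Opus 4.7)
The plan is to expand the commutator on the left-hand side and reduce it to known Lie-derivative identities. Writing the bracket out gives
\begin{equation*}
\left[\partial_t + \mathcal{L}_{u(t)}\,,\,\mathcal{L}_{\eta(t)}\right]a
=
\partial_t\bigl(\mathcal{L}_{\eta}a\bigr)
+ \mathcal{L}_{u}\mathcal{L}_{\eta}a
- \mathcal{L}_{\eta}\bigl(\partial_t a\bigr)
- \mathcal{L}_{\eta}\mathcal{L}_{u}a,
\end{equation*}
so there are really two pieces to handle: the time derivative of $\mathcal{L}_{\eta(t)}a(t)$, and the purely spatial commutator $[\mathcal{L}_u,\mathcal{L}_\eta]$.

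First I would establish the time-Leibniz rule
\begin{equation*}
\partial_t\bigl(\mathcal{L}_{\eta(t)}a(t)\bigr)
= \mathcal{L}_{\dot{\eta}(t)}a(t) + \mathcal{L}_{\eta(t)}\dot{a}(t).
\end{equation*}
This is the expected identity because the Lie derivative $\mathcal{L}_\eta a$ is computed from the spatial structure at a fixed time and depends bilinearly on $\eta$ and $a$; a short check in coordinates (for each of the tensor types relevant to this paper: scalars, densities and $2$-forms, as listed in the earlier remark) confirms it by differentiating under the spatial derivatives.

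Second, I would use the standard Lie-algebra homomorphism property of the Lie derivative acting on tensors,
\begin{equation*}
\left[\mathcal{L}_{u},\mathcal{L}_{\eta}\right]a
= \mathcal{L}_{[u,\eta]}a,
\end{equation*}
which holds at each fixed $t$ and is a classical identity (the map $X\mapsto \mathcal{L}_X$ from vector fields to derivations of the tensor algebra preserves brackets). Substituting both ingredients into the expanded commutator, the terms $\mathcal{L}_{\eta}\dot{a}$ cancel and one is left with
\begin{equation*}
\mathcal{L}_{\dot{\eta}}a + \mathcal{L}_{[u,\eta]}a
= \mathcal{L}_{\dot{\eta}+[u,\eta]}a,
\end{equation*}
which is the desired formula \eqref{comrel}.

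The only real obstacle is the first step: one must be comfortable that $\partial_t$ commutes with the spatial operations that define $\mathcal{L}_\eta a$ and that the linearity in $\eta$ really does produce the $\mathcal{L}_{\dot\eta}a$ term. Once this is verified for the tensor types of interest, the rest is just bookkeeping using the classical commutator identity for Lie derivatives. No further analytic subtleties arise because smoothness of $u(t)$, $\eta(t)$ and $a(t)$ in both space and time is assumed throughout.
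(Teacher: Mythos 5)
Your proposal is correct and follows essentially the same route as the paper: the paper's proof simply asserts the single ``product rule for Lie derivatives'' identity $\left(\partial_t + \mathcal{L}_{u}\right)\mathcal{L}_{\eta}a = \mathcal{L}_{(\dot{\eta}+[u,\eta])}a + \mathcal{L}_{\eta}\left(\partial_t + \mathcal{L}_{u}\right)a$, which is exactly the combination of your two ingredients (the time-Leibniz rule $\partial_t(\mathcal{L}_\eta a)=\mathcal{L}_{\dot\eta}a+\mathcal{L}_\eta\dot a$ and the classical identity $[\mathcal{L}_u,\mathcal{L}_\eta]=\mathcal{L}_{[u,\eta]}$). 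You have merely spelled out the details that the paper leaves implicit.
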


\begin{proof}
For any $a(t)\in V$, one computes by the product rule for Lie derivatives that
\begin{eqnarray*}
\left(\partial_t + \mathcal{L}_{u(t)}\right)\mathcal{L}_{\eta}a(t)
&=&
\mathcal{L}_{({\dot{\eta}} + [{u},{\eta}])}a(t)
+
\mathcal{L}_{\eta}\left(\partial_t + \mathcal{L}_{u(t)}\right)a(t)
\,.
\end{eqnarray*}
Hence, the commutation relation in (\ref{comrel}) holds, and because $a(t)\in V$ is arbitrary,  this implies the Lie derivative commutation relation 
\begin{equation}
\left[\partial_t + \mathcal{L}_{u(t)}\,,\,\mathcal{L}_{\eta}\right]
=
\mathcal{L}_{({\dot{\eta}} + [{u},{\eta}])}
\,.
\label{comrel-proof}
\end{equation}
\end{proof}
 
Under the assumption that the variational vector field ${\eta}$ satisfies the time-evolution equation  \eqref{e:eta_vector_field} required for a relabelling symmetry transformation, one finds the following commutator theorem. 
 
\begin{corollary}[Symmetry] \label{sym-cor}
If a vector field $\eta$ satisfies equation (\ref{e:eta_vector_field}) for an infinitesimal relabelling symmetry, then the Lie derivative $\mathcal{L}_\eta$ commutes with the evolution operator, $\left(\partial_t + \mathcal{L}_{u(t)}\right)$,
\begin{equation}
\left[\partial_t + \mathcal{L}_{u(t)}\,,\,\mathcal{L}_{\eta(t)}\right]a(t)
=
0
\quad\hbox{for}\quad
{\dot{\eta}} + [{u},{\eta}] = 0
\,.
\label{symrel}
\end{equation}
\end{corollary}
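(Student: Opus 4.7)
The plan is to obtain the corollary as an immediate specialization of the Commutator Theorem (Theorem \ref{com-thm}). The key observation is that the right-hand side of the commutation identity \eqref{comrel} is linear in the vector field $\dot{\eta} + [u,\eta]$ that appears in the Lie derivative, so whenever this combination vanishes, the Lie derivative on the right is the Lie derivative with respect to the zero vector field, which annihilates everything.

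Concretely, I would first invoke Theorem \ref{com-thm} for the given pair $(u(t),\eta(t))$ and an arbitrary advected quantity $a(t)\in V$, yielding
\[
\left[\partial_t + \mathcal{L}_{u(t)}\,,\,\mathcal{L}_{\eta(t)}\right]a(t)
=
\mathcal{L}_{({\dot{\eta}} + [{u},{\eta}])}a(t).
\]
Next I would use the symmetry hypothesis \eqref{e:eta_vector_field}, namely ${\dot{\eta}} + [{u},{\eta}] = 0$, to conclude that the vector field generating the Lie derivative on the right is identically zero, so that $\mathcal{L}_{({\dot{\eta}} + [{u},{\eta}])}a(t)=\mathcal{L}_{0}a(t)=0$. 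Putting these two facts together gives precisely \eqref{symrel}.

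There is effectively no obstacle here: all of the analytic work (using the product rule for Lie derivatives and handling the time dependence of both $u$ and $\eta$) has already been carried out in the proof of Theorem \ref{com-thm}. The only thing to be slightly careful about is that the hypothesis \eqref{e:eta_vector_field} is an identity in $t$, so the vanishing of $\mathcal{L}_{({\dot{\eta}} + [{u},{\eta}])}$ holds at every time, which is exactly what is needed to assert the commutator identity \eqref{symrel} as an operator identity (valid on every $a(t)\in V$) and not merely pointwise in $t$. Since the corollary is stated for arbitrary $a(t)$, I would finish by noting that the identity is an identity of operators on the space of advected quantities $V$.
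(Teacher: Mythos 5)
Your argument is correct and is exactly the paper's own proof: the authors likewise obtain the corollary by inserting \eqref{e:eta_vector_field} into the commutation relation \eqref{comrel}, so that the right-hand side becomes $\mathcal{L}_0 a(t)=0$. Your additional remarks about the identity holding for all $t$ and for all $a(t)\in V$ are just careful elaborations of the same one-line substitution.
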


\begin{proof}
This symmetry corollary follows by inserting equation (\ref{e:eta_vector_field}) into the commutation relation in equation  \eqref{comrel}.
\end{proof}

\begin{theorem}[Ertel theorem]\label{Ertel-thm}
If the quantity $a$ is advected as in equation (\ref{advect1}) and the vector field $\eta$ satisfies equation (\ref{e:eta_vector_field}) for an infinitesimal relabelling symmetry, then $\mathcal{L}_\eta{a}$ is also advected.
\end{theorem}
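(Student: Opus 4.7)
The plan is to invoke the machinery already built in the two preceding results, the Commutator Theorem \ref{com-thm} and the Symmetry Corollary \ref{sym-cor}, and observe that the statement falls out immediately. The two hypotheses of Ertel's theorem are exactly the two inputs needed: the advection equation $(\partial_t + \mathcal{L}_{u(t)})a = 0$ and the symmetry condition $\dot{\eta} + [u,\eta] = 0$.

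First I would expand the definition of the commutator in equation (\ref{comrel}) applied to $a$, writing
\[
(\partial_t + \mathcal{L}_{u(t)})\,\mathcal{L}_\eta a - \mathcal{L}_\eta (\partial_t + \mathcal{L}_{u(t)}) a = \mathcal{L}_{\dot\eta + [u,\eta]} a.
\]
Next I would use the symmetry hypothesis $\dot\eta + [u,\eta] = 0$ to conclude that the right-hand side vanishes (this is precisely the content of Corollary \ref{sym-cor}), and then use the advection hypothesis on $a$ to kill the second term on the left. What remains is
\[
(\partial_t + \mathcal{L}_{u(t)})\,\mathcal{L}_\eta a = 0,
\]
which is the statement that $\mathcal{L}_\eta a$ is advected, as required.

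There is essentially no obstacle here: the real work was done in proving the Commutator Theorem via the Leibniz (product) rule for Lie derivatives. Ertel's theorem is a clean corollary obtained by specializing that identity to the two relevant subspaces, namely the kernel of $\partial_t + \mathcal{L}_u$ acting on advected tensors and the kernel of the evolution operator $\partial_t + \mathrm{ad}_u$ acting on relabelling vector fields. The only thing one might want to double-check is that $\mathcal{L}_\eta a$ lives in the same space $\mathcal{V}$ as $a$ (so that calling it ``advected'' makes sense), but this is automatic since Lie derivatives preserve tensor type.
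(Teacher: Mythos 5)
Your proof is correct and follows essentially the same route as the paper: it invokes the Commutator Theorem together with the symmetry condition $\dot\eta + [u,\eta]=0$ to commute $\mathcal{L}_\eta$ past the evolution operator $\partial_t + \mathcal{L}_{u(t)}$, and then applies the advection equation for $a$ to conclude $(\partial_t + \mathcal{L}_{u(t)})\mathcal{L}_\eta a = 0$. This is exactly the paper's one-line argument, written out slightly more explicitly.
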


\begin{proof}
By equations (\ref{advect1}) and  \eqref{e:eta_vector_field} one finds the advection relation for $\mathcal{L}_\eta{a}$,
\begin{equation}
\left(\partial_t + \mathcal{L}_{u(t)}\right)\mathcal{L}_{\eta}a(t)
=
\mathcal{L}_{\eta}\left(\partial_t + \mathcal{L}_{u(t)}\right)a(t)
=
0
\,,
\label{persist-eqn}
\end{equation}
as a result of  the condition (\ref{advect1}) satisfied by advected quantities.
\end{proof}

\noindent
Consequently, if $\mathcal{L}_\eta a=0$ in equation \eqref{e:eta_advected_condition} holds initially, then it continues to hold under the EP flow. That is, we have the following.

\begin{corollary}[Persistence]\label{Persistence-cor}
If the vector field $\eta$ is a relabelling symmetry, then the symmetry condition for advected quantities $\mathcal{L}_{\eta}a(t)=0$ persists, provided it holds initially.
\end{corollary}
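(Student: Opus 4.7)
The plan is to deduce the persistence statement directly from the Ertel theorem together with the uniqueness of solutions to the advection equation. Set $b(t) := \mathcal{L}_{\eta(t)}a(t)$ and observe that $b$ is precisely the object whose vanishing we wish to propagate in time.

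First I would invoke the Ertel theorem: since $a(t)$ is advected and $\eta(t)$ satisfies the symmetry evolution $\dot\eta + [u,\eta]=0$, equation \eqref{persist-eqn} tells us that
\begin{equation*}
\left(\partial_t + \mathcal{L}_{u(t)}\right) b(t) \;=\; \left(\partial_t + \mathcal{L}_{u(t)}\right)\mathcal{L}_{\eta(t)}a(t) \;=\; 0.
\end{equation*}
Thus $b(t)$ itself satisfies the same linear advection equation \eqref{advect1} that governs advected tensor fields along the flow of $u$.

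Next I would appeal to uniqueness for this linear transport equation: for a smooth time-dependent vector field $u(t)$, the flow $g(t)$ generated by $u$ exists and equation \eqref{advect1} has the unique solution $b(t) = b(0)\,g(t)^{-1}$ given initial data $b(0)$. The hypothesis of the corollary is precisely that $b(0)=\mathcal{L}_{\eta(0)}a(0)=0$; hence $b(t)\equiv 0$ for all $t$ in the interval of existence of the EP flow. This is exactly the statement that the constraint $\mathcal{L}_{\eta(t)}a(t)=0$ persists.

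The only real subtlety is the appeal to uniqueness of the advection equation — a standard fact for smooth coefficients but worth flagging, since without it one could in principle have nontrivial solutions emerging from zero data. All the algebraic content has already been packaged into Theorem \ref{Ertel-thm} via the commutator identity \eqref{comrel}, so no further computation is required: the corollary is essentially a restatement of the Ertel theorem together with the trivial solution of a homogeneous linear evolution equation with zero initial condition.
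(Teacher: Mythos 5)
Your proposal is correct and follows essentially the same route as the paper: Corollary \ref{Persistence-cor} is deduced from equation (\ref{persist-eqn}) of Theorem \ref{Ertel-thm}, which shows that $\mathcal{L}_{\eta}a(t)$ itself satisfies the homogeneous advection equation (\ref{advect1}), so vanishing initial data propagates. You are in fact slightly more careful than the paper's one-line proof, in that you explicitly flag the uniqueness of solutions $b(t)=b(0)\,g(t)^{-1}$ of the linear transport equation, a step the paper leaves implicit.
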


\begin{proof}
If the left side of equation (\ref{persist-eqn}) vanishes initially at $t=0$, then it continues to vanish for all time $t>0$.
\end{proof}
\begin{definition}[Locally conserved quantities]
A locally conserved quantity $c(t)$ follows from the equations of motion and satisfies a local conservation law,
\begin{equation}
\left(\partial_t + \mathcal{L}_{u(t)}\right)c(t) = 0
\,,
\label{localCL}
\end{equation}
which has the same form as an advection law.
\end{definition}

\begin{remark}[Local conservation laws]\rm
One distinguishes between advected quantities and locally conserved quantities. 
Namely, equations for advected quantities are obtained from the action $G\times V\to V$ and are independent of the fluid velocity. In contrast, local conservation laws involve the fluid velocity because they arise from the equations of motion. 
\end{remark}

\begin{corollary}[Iterated conserved quantities]
If the quantity $c(t)$ satisfies a local conservation law (\ref{localCL}) as a result of the EP equations of motion for all time, then $\mathcal{L}_{\eta}c(t)$ is also locally conserved for any relabelling symmetry.
\end{corollary}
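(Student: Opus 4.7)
The plan is to observe that this corollary is essentially the same algebraic fact as the Ertel theorem (Theorem \ref{Ertel-thm}), with the locally conserved quantity $c(t)$ playing the role previously played by the advected quantity $a(t)$. The key point, emphasized in the remark just above, is that the local conservation law \eqref{localCL} has the same form $(\partial_t + \mathcal{L}_{u(t)})c(t)=0$ as the advection law \eqref{advect1}; although the two have different origins (equations of motion versus group action), for the purposes of this commutation argument only the formal structure of the equation matters.

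With that observation in hand, I would simply invoke the Commutator theorem (Theorem \ref{com-thm}), which holds for \emph{any} pair of smooth time-dependent vector fields and \emph{any} tensor field in $V$, hence in particular for $c(t)$. This gives
\begin{equation*}
\left[\partial_t + \mathcal{L}_{u(t)}\,,\,\mathcal{L}_{\eta(t)}\right]c(t)
=
\mathcal{L}_{(\dot{\eta} + [u,\eta])}c(t).
\end{equation*}
Since $\eta$ is assumed to be a relabelling symmetry, equation \eqref{e:eta_vector_field} forces $\dot{\eta} + [u,\eta] = 0$, so the right-hand side vanishes, and the Lie derivative $\mathcal{L}_\eta$ commutes with the evolution operator on $c(t)$. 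Therefore
\begin{equation*}
\left(\partial_t + \mathcal{L}_{u(t)}\right)\mathcal{L}_{\eta}c(t)
= \mathcal{L}_{\eta}\left(\partial_t + \mathcal{L}_{u(t)}\right)c(t)
= \mathcal{L}_{\eta}(0) = 0,
\end{equation*}
which is exactly the statement that $\mathcal{L}_\eta c(t)$ satisfies the local conservation law \eqref{localCL}.

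There is no real obstacle here; the proof is a one-line consequence of the Symmetry corollary \ref{sym-cor} applied to $c$ in place of $a$. The only conceptual point worth flagging in the write-up is precisely the one made in the preceding remark, namely that the commutator identity is indifferent to whether the tensor field on which it acts is advected ``kinematically'' by the group action or ``dynamically'' as a by-product of the EP equations of motion. Consequently, the corollary can be iterated: starting from any locally conserved $c(t)$ and any finite collection of commuting relabelling symmetries $\eta_1, \dots, \eta_k$, the quantity $\mathcal{L}_{\eta_1}\cdots\mathcal{L}_{\eta_k}c(t)$ is again locally conserved.
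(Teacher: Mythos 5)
Your proof is correct and is essentially the paper's own argument: the authors prove the corollary by ``replacing $a(t)$ by $c(t)$ in equation (\ref{persist-eqn})'', which is exactly the substitution you carry out explicitly via Theorem \ref{com-thm} and the symmetry condition \eqref{e:eta_vector_field}. Your closing observation that the commutator identity is blind to whether $(\partial_t+\mathcal{L}_{u})c=0$ arises kinematically or dynamically is precisely the point of the remark preceding the corollary, so nothing further is needed.
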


\begin{proof}
This follows from replacing $a(t)$ by $c(t)$ in equation (\ref{persist-eqn}).   
\end{proof}

\begin{remark}
Iterating this process further is possible, but once a conserved quantity can be expressed in terms of advected quantities, iteration does not lead to new information.
\end{remark}
In the following section we adopt the strategy of constructing the symmetry vector fields $\eta$ in Noether's theorem using the advected quantities that they preserve. This is accomplished in the examples below for several representative fluid flows in three dimensions. Occasionally, when enough freedom remains in the infinitesimal Lie symmetry $\eta$, the local conservation law that emerges from Noether's theorem may be re-substituted into the weak form of Noether's theorem to compute an additional integral conservation law. 
%%%%%%%%%%%%%%%%%%%%%%%%%%%%%%%%%
\section{Examples} \label{sec-examples}
\subsection{Advected density: Conservation of vorticity and helicity}
For the specific case that the mass density $a=\rho\diff{V}$ is
advected and other advected quantities are absent, e.g., in barotropic
fluid dynamics, the symmetry condition
(\ref{e:eta_advected_condition}) is
\[
\mathcal{L}_{\eta}(\rho\diff{V})
= d\left(\eta\contract\rho\diff{V}\right)
=0
\,.
\]
Therefore, by Poincar\'e's Lemma, one may write 
locally that
\begin{equation}
\eta\contract\rho\diff{V}
=
d({\boldsymbol\Psi}\cdot\diff{\MM{x}})
=
\curl{\boldsymbol\Psi}\cdot\diff{\MM{S}}
\,,
\label{Psi-2form}
\end{equation}
for some vector function ${\boldsymbol\Psi}$. 

For non-trivial topology
(on a spherical annulus, for example), we may choose a simply connected
patch bounded by a simple closed curve $C(t)$ that is transported by
the fluid velocity $u$. We then restrict $\eta$ at each time to the
Lie algebra of vector fields that leave $C(t)$ invariant.
(These vector fields $\eta$ are tangent to the curve $C(t)$.) 
This choice allows us to define ${\boldsymbol\Psi}$ on the patch enclosed by $C(t)$
for each relabelling symmetry $\eta$.

Equation (\ref{Psi-2form}) the vector field $\eta$ for each relabelling symmetry may be expressed in terms of a vector function ${\boldsymbol\Psi}$, as
\[
\eta = \rho^{-1}{\curl}{\boldsymbol\Psi}\cdot\nabla
.\]
All such vector fields satisfy the advection condition (\ref{e:eta_advected_condition}) for the density 
$\rho\diff{V}$, since
\[
\mathcal{L}_{\eta}(\rho\diff{V}) 
= {\rm div} (\rho\,\rho^{-1}\curl{\boldsymbol\Psi})\,\diff{V} 
= {\rm div} (\curl{\boldsymbol\Psi})\,\diff{V} = 0
\,.
\]

We substitute this solution for $\eta$ into Noether's theorem and use Corollary \ref{Persistence-cor} (persistence  of the symmetry relation) to find,
\begin{eqnarray*}
  0  & = & \dd{}{t}\left\langle\dede{l}{u},{\eta}\right\rangle 
  \\
& = & \dd{}{t}\int_{\mathcal{D}}
\dede{l}{u}\cdot {\eta}\diff{V}
  \\
  & = & \dd{}{t}\int_{\mathcal{D}}
  \frac{1}{\rho}\dede{l}{\MM{u}}\cdot \diff{\MM{x}}\wedge{\eta}\contract(\rho\diff{V})
  \\
 \hbox{By (\ref{Psi-2form})}
  & = & \dd{}{t}\int_{\mathcal{D}}
  \frac{1}{\rho}\dede{l}{\MM{u}}\cdot \diff{\MM{x}}\wedge
  \diff({\boldsymbol\Psi}\cdot\diff{\MM{x}})
  \\
  & = & \int_{\mathcal{D}}
  \pp{}{t}\left(\frac{1}{\rho}\dede{l}{\MM{u}}\cdot \diff{\MM{x}}\right)\wedge\diff
  ({\boldsymbol\Psi}\cdot\diff{\MM{x}})
  +
  \int_{\mathcal{D}}
  \frac{1}{\rho}\dede{l}{\MM{u}}\cdot \diff{\MM{x}}\wedge\pp{}{t}\diff({\boldsymbol\Psi}\cdot\diff{\MM{x}})
   \\
  \hbox{By (\ref{persist-eqn}) }
  & = & \int_{\mathcal{D}}
  \pp{}{t}\left(\frac{1}{\rho}\dede{l}{\MM{u}}\cdot \diff{\MM{x}}\right)\wedge\diff
  ({\boldsymbol\Psi}\cdot\diff{\MM{x}})
  +
  \int_{\mathcal{D}}
  \frac{1}{\rho}\dede{l}{\MM{u}}\cdot \diff{\MM{x}}\wedge(-\diff\mathcal{L}_u({\boldsymbol\Psi}\cdot
  \diff{\MM{x}}))
   \\
  & = & -\int_{\mathcal{D}} 
 \left( \pp{}{t}\diff\left(\frac{1}{\rho}\dede{l}{\MM{u}}\cdot \diff{\MM{x}}\right)
  +
  \mathcal{L}_u
  \diff
\left(\frac{1}{\rho}\dede{l}{\MM{u}}\cdot \diff{\MM{x}}\right)\right)
\wedge
  ({\boldsymbol\Psi}\cdot\diff{\MM{x}})
  \\
& = &  
-\int_{\mathcal{D}} 
 \left(\pp{}{t} + \mathcal{L}_{u(t)} \right)
  \diff
\left(\frac{1}{\rho}\dede{l}{\MM{u}}\cdot \diff{\MM{x}}\right)
\wedge
  ({\boldsymbol\Psi}\cdot\diff{\MM{x}})
\,.
\end{eqnarray*}
Since ${\boldsymbol\Psi}$ is arbitrary, vanishing of the final line gives the weak form of the \bfi{local conservation of the vorticity 2-form}, 
\begin{equation}
\left(\pp{}{t} +
  \mathcal{L}_u\right)
\left(\curl\frac{1}{\rho}\dede{l}{\MM{u}}\cdot \diff{\MM{S}}\right)
=0
\,,
\label{vort2form}
\end{equation}
where vorticity is defined as the curl of the specific momentum (momentum per unit mass). The specific momentum is equal to the velocity for Euler fluids; so its curl in that case is the usual Euler fluid vorticity.

\begin{remark}[Two velocity vectors]\rm
Two velocity vectors appear in the computations above: These are the
fluid velocity vector $\MM{u}$ in the Lie derivative $\mathcal{L}_u$
and the specific momentum covector $\rho^{-1}\delta l/\delta\MM{u}$ in
the 1-form $\rho^{-1}\delta l/\delta\MM{u} \cdot \diff\MM{x}$. These
two velocities are the basic ingredients for performing  modelling and
analysis in any ideal fluid problem. They appear together and have
separate meanings in the Euler-Poincar\'e equation and throughout the
present paper, as illustrated in the examples below.
\end{remark}
\begin{example}[Incompressible Euler equations]
The incompressible Euler equations have reduced Lagrangian 
\[
l(u,\rho) = \int_{\mathcal{D}} \frac{\rho|u|^2}{2} + p(1-\rho)\diff{V},
\]
where $p$ is a Lagrange multiplier enforcing incompressibility $\rho=1$.
The variational derivatives are 
\[
\dede{l}{u} = \rho u, \quad \dede{l}{\rho} = \frac{|u|^2}{2} - p.
\]
In this case, the conserved vorticity is 
\[
\curl\left(\frac{1}{\rho}\dede{l}{u}\right)  = \curl u.
\]
\end{example}
\begin{example}[Incompressible Euler-alpha equations]
The incompressible Euler-alpha equations with $\rho=1$ have reduced Lagrangian 
\[
l(u,\rho) = \int_{\mathcal{D}} \frac{\rho}{2}(|u|^2+\alpha^2|\nabla u|^2) + p(1-\rho)\diff{V}.
\]
The variational derivatives are 
\[
\dede{l}{u} = \rho u - \alpha^2\nabla\cdot \rho \nabla u, \quad
\dede{l}{\rho} = \frac12(|u|^2+\alpha^2 |\nabla u|^2) - p.
\]
In this case, the conserved vorticity is 
\[
\curl\left(\frac{1}{\rho}\dede{l}{u}\right)  = \curl u - \frac{\alpha^2}{\rho}\nabla\cdot \rho \nabla u,
\]
which becomes $\curl(u - \alpha^2\nabla^2u)$ since $\rho=1$.
\end{example}
In the preserved 2-form $\diff({\boldsymbol\Psi}\cdot\diff{\MM{x}})$ introduced in equation (\ref{Psi-2form}), the 
vector function ${\boldsymbol\Psi}$ is determined (locally) for each choice of symmetry vector field $\eta$. 
Likewise, we have seen that each choice of the vector function ${\boldsymbol\Psi}$ corresponds to a certain relabelling symmetry $\eta$. 
Consequently,  Corollary \ref{Persistence-cor} for the persistence of symmetry and the definition of a local conservation law in equation (\ref{localCL}) would allow us to replace $\diff({\boldsymbol\Psi}\cdot\diff{\MM{x}})$ with another conserved 2-form. In particular, we may choose the conserved vorticity 2-form and set,
\[
\diff({\boldsymbol\Psi}\cdot\diff{\MM{x}})= 
\diff\left(\frac{1}{\rho}\dede{l}{\MM{u}}\cdot \diff{\MM{x}}\right)
.\]
After this identification, we may draw the conclusion from persistence in Corollary \ref{Persistence-cor} that
\begin{eqnarray*}
  0 & = & \dd{}{t}\left\langle\dede{l}{u},{\eta}\right\rangle 
  \\
  & = & \dd{}{t}\int_{\mathcal{D}}
  \frac{1}{\rho}\dede{l}{\MM{u}}\cdot \diff{\MM{x}}\wedge{\eta}\contract\rho\diff{V}
  \\
  & = & \dd{}{t} \int_{\mathcal{D}}
  \left(\frac{1}{\rho}\dede{l}{\MM{u}}\cdot \diff{\MM{x}}\right)\wedge\diff
  ({\boldsymbol\Psi}\cdot\diff{\MM{x}})
  \\
  & = & \dd{}{t}\int_{\mathcal{D}}
  \frac{1}{\rho}\dede{l}{\MM{u}}\cdot \diff{\MM{x}}\wedge
\diff\left(\frac{1}{\rho}\dede{l}{\MM{u}}\cdot \diff{\MM{x}}\right)
  \\
  & = & \dd{}{t}\int_{\mathcal{D}}
  \frac{1}{\rho}\dede{l}{\MM{u}}\cdot 
\curl\left(\frac{1}{\rho}\dede{l}{\MM{u}}\right)
\diff{V}
\,.
\end{eqnarray*}
Thus, the weak form of the local conservation law for vorticity yields a  conservation law for the \bfi{helicity} integral,
\begin{equation}
H := \int_{\mathcal{D}}
  \frac{1}{\rho}\dede{l}{\MM{u}}\cdot 
\curl\left(\frac{1}{\rho}\dede{l}{\MM{u}}\right)
\diff{V}
\,.
\label{helicity-int}
\end{equation}
The vector field for the symmetry associated with helicity conservation is,
\begin{equation}
\eta_{H}
=   
{\rho}^{-1}\curl\left(\frac{1}{\rho}\dede{l}{\MM{u}}\right)
\cdot\nabla
.
\label{helicityVF}
\end{equation}
The characteristic paths of the vector field $\eta_{H}$ may be regarded as vortex lines, and these  satisfy the symmetry condition (\ref{e:eta_vector_field}), as a result of the EP equation (\ref{EP1}). That is, the characteristic paths of $\eta_{H}$ are frozen into the flow of the fluid velocity. This means that shifts along these paths are relabelling symmetries and the corresponding Noether conservation law is the helicity ${H}$ in equation (\ref{helicity-int}).
In particular, the symmetry associated with conservation of helicity is a relabelling of the frozen-in vortex lines. 

As mentioned earlier, the momentum per unit mass  is equal to the velocity $\MM{u}$ for Euler fluids, so that conservation of the helicity for Euler fluids may be expressed as
\[
\dd{}{t}\int_{\mathcal{D}}
  \MM{u}\cdot \curl\,\MM{u}
\diff{V}
=
0
\,.
\]
The spatial integral $H$ defining the fluid helicity in (\ref{helicity-int}) measures the knottedness, or number of linkages, of the vortex lines, that is, lines of $\MM{\omega}=\curl({\rho}^{-1}\delta{l}/\delta\MM{u})$. This fluid helicity indicates the topological complexity of the winding of the vortex lines in $\MM{\omega}$ amongst themselves in the spatial domain \cite{ArKh1998}. 
Physically, helicity conservation arises because the vortex lines are frozen into the flow of the diffeomorphisms and, thus, they cannot unknot.

\begin{remark}[Ertel's theorem in hydrodynamic notation]\rm
We identify the evolution operator in Theorem \ref{Ertel-thm} as the familiar Lagrangian time derivative ${\rm D}/{\rm D}t$,
\[
\partial_t + \mathcal{L}_{u(t)}  = \frac{{\rm D}}{{\rm D}t}
\,,
\]
and we express the vector field $\eta_{H}$ in equation (\ref{helicityVF}) in terms of a generalised vorticity vector $\MM{\omega}$, defined as
\[
\eta_{H}
=   
{\rho}^{-1}\curl\left(\frac{1}{\rho}\dede{l}{\MM{u}}\right)
\cdot\nabla
=:
{\rho}^{-1}\MM{\omega}\cdot\nabla
\quad\hbox{with}\quad
\MM{\omega} 
:= 
\curl\left(\frac{1}{\rho}\dede{l}{\MM{u}}\right)
.
\]
Introducing this familiar hydrodynamic notation allows one to write the symmetry relation (\ref{symrel}) in the case for the action of the Lie derivative on a function $a(t)$ as 
\begin{equation}
\frac{{\rm D}}{{\rm D}t}({\rho}^{-1}\MM{\omega}\cdot\nabla) a(t)
=
({\rho}^{-1}\MM{\omega}\cdot\nabla) \frac{{\rm D}}{{\rm D}t}a(t)
\,,
\label{Ertel-D/Dt}
\end{equation}
which is the usual form of the classical Ertel theorem \cite{{Er1942}}. 
For a scalar advected function, $a\in \Lambda^0$,  Corollary \ref{Persistence-cor} (persistence)  yields yet another scalar conservation law, for $q=({\rho}^{-1}\MM{\omega}\cdot\nabla)a$.
\end{remark}

\subsection{Advected density and tracer: Conservation of potential vorticity}
\begin{proposition}
For the case of two advected quantities $a_1=\rho\diff{V}\in \Lambda^3$, $a_2={s} \in \Lambda^0$, the simultaneous solution of $\mathcal{L}_\eta\rho\diff{V}=0$ and $\mathcal{L}_\eta {s}=0$ is
\begin{equation}
\eta\contract\rho\diff{V} = \diff\left(\phi\diff {s}\right),
\label{simulsoln}
\end{equation}
for general $\phi\in \Lambda^0$.
\end{proposition}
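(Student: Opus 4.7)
The plan is to recast each of the two advection conditions as a constraint on the 2-form $\omega := \eta \contract \rho\,\diff{V}$, and then to solve the resulting system by a local Poincar\'e-lemma argument adapted to the level sets of $s$.

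As a first step I would re-encode each condition via Cartan's formula. Because $\rho\,\diff{V}$ is a top form in three dimensions, $\diff(\rho\,\diff{V}) = 0$, so
\[
\mathcal{L}_\eta(\rho\,\diff{V}) \;=\; \diff(\eta \contract \rho\,\diff{V}) \;=\; \diff\omega,
\]
and the density-advection condition is equivalent to $\omega$ being closed. For the scalar condition I would exploit the fact that in three dimensions the 4-form $\diff{s} \wedge \rho\,\diff{V}$ vanishes identically, so the graded Leibniz rule for the interior product gives
\[
0 \;=\; \eta \contract (\diff{s} \wedge \rho\,\diff{V}) \;=\; (\eta \contract \diff{s})\,\rho\,\diff{V} \;-\; \diff{s} \wedge \omega.
\]
Hence $\mathcal{L}_\eta s = \eta \contract \diff{s} = 0$ is equivalent to the algebraic relation $\diff{s} \wedge \omega = 0$. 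Taken together, the two advection conditions say precisely that $\omega$ is a closed 2-form annihilated by $\diff{s}$.

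The heart of the argument is then to solve this combined system. In local coordinates $(s,y,z)$ adapted to $s$, available wherever $\diff{s}\neq 0$, the algebraic condition $\diff{s}\wedge\omega=0$ forces the $\diff{y}\wedge\diff{z}$ component of $\omega$ to vanish, so $\omega = \diff{s}\wedge\beta$ for a 1-form $\beta = a(s,y,z)\,\diff{y} + b(s,y,z)\,\diff{z}$ with no $\diff{s}$-component. The closedness condition $\diff\omega = -\diff{s}\wedge\diff\beta = 0$ reduces to $\partial_y a = \partial_z b$, which is exactly the closedness of the restriction of $\beta$ to each level set $\{s=\mathrm{const}\}$. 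Applying the Poincar\'e lemma on these level sets with $s$ as a parameter produces a local scalar $f(s,y,z)$ with $a = \partial_y f$ and $b = \partial_z f$, so that $\beta = \diff{f} - f_s\,\diff{s}$ and
\[
\omega \;=\; \diff{s} \wedge (\diff{f} - f_s\,\diff{s}) \;=\; \diff{s} \wedge \diff{f} \;=\; \diff(\phi\,\diff{s}), \qquad \phi := -f,
\]
which is the claimed form. The converse is routine: if $\eta\contract\rho\,\diff{V}=\diff(\phi\,\diff{s})$, then $\diff\omega=0$ recovers $\mathcal{L}_\eta(\rho\,\diff{V})=0$, and $\diff{s}\wedge\omega = \diff{s}\wedge\diff\phi\wedge\diff{s} = 0$ recovers $\mathcal{L}_\eta s=0$.

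The step I expect to be the main obstacle is the parameter-dependent Poincar\'e argument: one must factor the closed 2-form $\omega$ through $\diff{s}$ with smooth dependence on all three coordinates. This is immediate on a simply connected patch on which $\diff{s}$ does not vanish, but across critical points of $s$, or on domains with nontrivial topology, one would need to restrict $\eta$ to such patches, in the spirit of the earlier restriction of $\eta$ to vector fields tangent to a transported curve $C(t)$. Accordingly I would phrase the result as a local formula, exactly as the proposition is stated.
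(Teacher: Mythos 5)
Your proof is correct, and it follows the same basic route as the paper's: both arguments reduce the two symmetry conditions to the statement that the 2-form $\omega=\eta\contract\rho\diff{V}$ is closed and satisfies $\diff{s}\wedge\omega=0$. The difference is in how much of the work is actually carried out. The paper's proof stops essentially at that point: it wedges the local potential form $\diff({\boldsymbol\Psi}\cdot\diff{\MM{x}})$ with $\diff{s}$, observes the result vanishes, and then writes the chain $\diff{s}\wedge\diff({\boldsymbol\Psi}\cdot\diff{\MM{x}})=\diff{s}\wedge\diff(\phi\,\diff{s})$ --- which only checks that the claimed answer is \emph{consistent} with the constraints (both sides are zero), not that every solution has that form. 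Your adapted-coordinate argument --- forcing the $\diff{y}\wedge\diff{z}$ component to vanish, recognising closedness of $\omega$ as closedness of $\beta$ restricted to the level sets of $s$, and applying the Poincar\'e lemma fibrewise --- supplies exactly the missing implication, and your caveats about critical points of $s$ and nontrivial topology are the right ones (they mirror the paper's earlier restriction of $\eta$ to patches bounded by a transported curve). One small slip: the closedness condition in your coordinates should read $\partial_z a=\partial_y b$ rather than $\partial_y a=\partial_z b$; this is evidently a typo, since your subsequent choice $a=\partial_y f$, $b=\partial_z f$ satisfies the correct relation.
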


\noindent
The proof of this proposition is simple, because $a_1$ is a top form and $a_2$ is a bottom form.

\begin{proof}Symmetry requires that these two advected quantities satisfy
\[
\eta\contract\rho\diff{V} = \diff ({\boldsymbol\Psi}\cdot\diff{\MM{x}})
\quad\hbox{and}\quad
\eta\contract d{s} =  \eta \cdot \nabla {s} =  0
\,.
\]
Thus,
\[
d{s}\wedge \eta\contract\rho\diff{V}  =  (\nabla {s}\cdot \eta)\, \rho\diff{V}= 0
\,,
\]
and, hence,
\[
0 = (\nabla {s}\cdot \eta)\, \rho\diff{V} =
d{s}\wedge (\eta\contract\rho\diff{V})  =  
d{s}\wedge \diff ({\boldsymbol\Psi}\cdot\diff{\MM{x}})  =
d{s}\wedge \diff\left(\phi\diff {s}\right)
.\]
\end{proof}
The two advected quantities $d({\boldsymbol\Psi}\cdot\diff{\MM{x}})$ and $\diff\left(\phi\diff {s}\right)$ both equal $(\eta\contract\rho\diff{V})$, so they satisfy the same evolution equation. In particular, the following advection equation holds
\[
\pp{}{t}\diff\left(\phi\diff {s}\right)
= -\diff\mathcal{L}_u\left(\phi\diff {s}\right).
\]
Now one may substitute $\eta\contract\rho\diff{V}=\diff(\phi\diff {s})$ into the Noether theorem calculation as above and recompute, finding this time that:
\begin{eqnarray*}
  0 & = & \dd{}{t}\left\langle\dede{l}{u},{\eta}\right\rangle 
  \\
  & = & 
  \dd{}{t}\int_{\mathcal{D}}\left(\frac{1}{\rho}\dede{l}{\MM{u}}\cdot \diff{\MM{x}}\right)
\wedge\,(\eta\contract\rho\diff{V})
  \\
  \hbox{By  (\ref{simulsoln})}
  & = & 
  \dd{}{t}\int_{\mathcal{D}}\left(\frac{1}{\rho}\dede{l}{\MM{u}}\cdot \diff{\MM{x}}\right)
\wedge\, \diff(\phi\diff {s})
\\
  \hbox{By (\ref{persist-eqn})}
  & = & 
  \int_{\mathcal{D}}
  \pp{}{t}\left(\frac{1}{\rho}\dede{l}{\MM{u}}\cdot \diff{\MM{x}}\right)
  \wedge
  \diff(\phi\diff {s})
  +
  \int_{\mathcal{D}}
  \frac{1}{\rho}\dede{l}{\MM{u}}\cdot \diff{\MM{x}}
  \wedge
  (-\diff\mathcal{L}_u(\phi\diff {s}))
   \\
  & = & -\left\langle
  \pp{}{t}\diff\left(\frac{1}{\rho}\dede{l}{\MM{u}}\cdot \diff{\MM{x}}\right)
  \wedge\diff {s}
  +
  \mathcal{L}_u
  \diff\left(\frac{1}{\rho}\dede{l}{\MM{u}}\cdot \diff{\MM{x}} \right)\wedge \diff {s},\phi
  \right\rangle \\
  \hbox{By ${s}=a_2$}
  & = & -\left\langle
  \pp{}{t}\left(\diff\left(\frac{1}{\rho}\dede{l}{\MM{u}}\cdot \diff{\MM{x}}\right)
  \wedge\diff {s}\right)
  +
  \mathcal{L}_u\left(
  \diff\left(\frac{1}{\rho}\dede{l}{\MM{u}}\cdot \diff{\MM{x}} \right)\wedge \diff {s}
  \right),\phi
  \right\rangle
\\
& = &  
- \left\langle \left(\pp{}{t} + \mathcal{L}_{u(t)} \right)
\left(
  \diff\left(\frac{1}{\rho}\dede{l}{\MM{u}}\cdot \diff{\MM{x}} \right)\wedge \diff {s}
  \right), {\phi} \right\rangle
.
\end{eqnarray*}
As before, all boundary terms have been dropped in spatial integrations by parts.
Since $\phi$ is arbitrary, the final line of the calculation above gives the weak form of the conservation law for Ertel \bfi{potential vorticity} (PV) density, defined as \cite{Er1942}, 
\begin{equation}
q\,\rho\diff{V}
:=
\diff\left(\frac{1}{\rho}\dede{l}{\MM{u}}\cdot \diff{\MM{x}}\right) \wedge \diff {s}
=
\curl\left(\frac{1}{\rho}\dede{l}{\MM{u}}\right)\cdot
\nabla{s} \,dV
\,.
\label{pv-def}
\end{equation}
The corresponding local conservation law is
\begin{equation}
\left(\pp{}{t} +
  \mathcal{L}_u\right)
\left(q\,\rho\diff{V}\right)
=0
\,.
\label{pv-advect}
\end{equation}
The arbitrary function $\phi$ in the weak form of the local conservation law for potential vorticity also yields the integral conservation law, 
\begin{equation}
\dd{}{t}\int_{\mathcal{D}} \Phi(q)\,\rho\diff{V} = 0
\,,
\label{pv-int}
\end{equation}
in which $\Phi$ is an arbitrary function 
and we used $\partial_t(\Phi(q)\rho)=-\,\nabla\cdot(\Phi(q)\rho)$.

The vector field for the symmetry associated with PV conservation in (\ref{pv-advect}) may be computed from equation (\ref{simulsoln}) as
\begin{equation}
\eta_{PV}
=   
{\rho}^{-1}\left(\nabla\phi\times\nabla{s}\right)
\cdot\nabla
,
\label{pv-eta}
\end{equation}
and it represents shifts along level sets of ${s}$. In particular, the
characteristic paths of the vector field $\eta_{PV}$ lie along the
level sets of ${s}$, which in turn are frozen into the flow of the fluid
velocity. This means that shifts along the characteristic paths of
$\eta_{PV}$ are relabelling symmetries and the corresponding Noether
conservation law is the advection of the potential vorticity $q$ in
equation (\ref{pv-def}).

\begin{example}[Rotating Euler-Boussinesq equations]
The reduced Lagrangian for the rotating Euler-Boussinesq equations is
\[
l = \int_{\mathcal{D}} \rho\frac{|u|^2}{2}+ \rho u \cdot R - z b + p(1-\rho)\diff{V},
\]
where $b$ is the buoyancy satisfying $\pp{b}{t} + \mathcal{L}_ub=0$,
and where $R$ satisfies $\curl R = 2\MM{\Omega}$. Consequently,   the 
conserved potential vorticity may be computed, as follows:
  \[
  \curl\left(\frac{1}{\rho}\dede{l}{\MM{u}}\right)
  =
  \curl(\MM{u}+\MM{R})
  \quad\hbox{with}\quad
  \curl\MM{R} =  2\MM{\Omega}
  \quad\Longrightarrow\quad
  q = (\curl\MM{u} + 2\MM{\Omega} ) \cdot\nabla{b}.
  \]
\end{example}

\subsection{Advected density and flux (2-form): Conservation of cross helicity}

\begin{proposition}
  For the case that $a_1=\rho\diff{V}\in \Lambda^3$ and
  $a_2=\MM{B}\cdot\diff \MM{S} = \diff (\mathbf{A}\cdot\diff{\MM{x}})
  \in \Lambda^2$, the only
simultaneous
  solution of $\mathcal{L}_\eta\rho\diff{V}=0$ and $\mathcal{L}_\eta
  \MM{B}\cdot\diff \MM{S}=0$ is
\begin{equation}
\eta\contract\rho\diff{V} = \MM{B}\cdot\diff \MM{S}
\,.
\label{simulsoln1}
\end{equation}
\end{proposition}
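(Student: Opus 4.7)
The plan is to reduce each of the two symmetry conditions to an exterior-calculus statement via Cartan's magic formula $\mathcal{L}_\eta\alpha = \diff(\eta\contract\alpha) + \eta\contract\diff\alpha$, verify that the proposed identity $\eta\contract\rho\diff{V} = \MM{B}\cdot\diff\MM{S}$ (equivalently $\eta = \MM{B}/\rho$) does solve both conditions simultaneously, and then argue that essentially no freedom remains. In contrast to the preceding proposition, where $a_2$ was a $0$-form and left a one-function family $\diff(\phi\diff{s})$ of solutions, here $a_2$ is itself a $2$-form whose contraction with $\eta$ consumes most of the slack in the $2$-form $\eta\contract\rho\diff{V}$, so I expect a rigid answer.

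First, since $\rho\diff{V}$ is a top form, $\diff(\rho\diff{V})=0$, and the condition $\mathcal{L}_\eta(\rho\diff{V})=0$ becomes $\diff(\eta\contract\rho\diff{V})=0$; that is, $\eta\contract\rho\diff{V}$ must be a closed $2$-form. Because $\MM{B}\cdot\diff\MM{S} = \diff(\MM{A}\cdot\diff\MM{x})$ is exact, hence closed, Cartan's formula also gives $\mathcal{L}_\eta(\MM{B}\cdot\diff\MM{S}) = \diff(\eta\contract\MM{B}\cdot\diff\MM{S})$, so the second condition forces the $1$-form $\eta\contract(\MM{B}\cdot\diff\MM{S})$ to be closed.

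To verify the stated identity I set $\eta\contract\rho\diff{V} = \MM{B}\cdot\diff\MM{S}$, which is equivalent to $\eta = \MM{B}/\rho$ because contraction with the nowhere-vanishing volume form $\rho\diff{V}$ is a pointwise bijection between vector fields and $2$-forms. The first condition then reads $\diff(\MM{B}\cdot\diff\MM{S}) = \diff^2(\MM{A}\cdot\diff\MM{x}) = 0$, which holds automatically. For the second, since $\eta$ is parallel to $\MM{B}$ the interior product $\eta\contract(\MM{B}\cdot\diff\MM{S})$ vanishes identically (a $2$-form contracted with a vector in its own kernel direction is zero in three dimensions), whence $\mathcal{L}_\eta(\MM{B}\cdot\diff\MM{S})=0$ trivially.

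The main obstacle will be the uniqueness (``only'') part of the claim. I would argue that in the strong form $\eta\contract(\MM{B}\cdot\diff\MM{S}) = 0$, the second condition forces $\eta\times\MM{B}=0$, so $\eta = f\MM{B}/\rho$ for some scalar $f$; the first condition then reduces to $\diff f\wedge(\MM{B}\cdot\diff\MM{S}) = 0$, i.e.\ $\MM{B}\cdot\nabla f = 0$, so $f$ is Lie-dragged along the $\MM{B}$-flow and represents only a trivial reparameterisation of the same family of relabellings along the flux lines. Normalising $f\equiv 1$ then recovers the stated identity. The delicate step is the reduction from ``$\eta\contract(\MM{B}\cdot\diff\MM{S})$ closed'' to ``vanishing''; this is a local or gauge-choice issue that does not affect the resulting Noether conservation law derived from $\eta$, and it is the one place in the argument that genuinely requires care.
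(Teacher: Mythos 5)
Your verification is correct and is substantially more thorough than the paper's own proof, which consists of a single line: recall from the density condition that $\eta\contract\rho\diff{V} = \diff({\boldsymbol\Psi}\cdot\diff{\MM{x}})$ must be an exact $2$-form, and then \emph{identify} $\diff({\boldsymbol\Psi}\cdot\diff{\MM{x}}) = \MM{B}\cdot\diff\MM{S}$, which is legitimate precisely because $\MM{B}\cdot\diff\MM{S}=\diff(\MM{A}\cdot\diff\MM{x})$ is itself exact. The paper thus only exhibits the solution and makes no attempt whatsoever to justify the word ``only''; your Cartan-formula reduction of both conditions to closedness statements, the pointwise-bijection argument giving $\eta=\MM{B}/\rho$, and the check that $\eta\contract(\MM{B}\cdot\diff\MM{S})=0$ for $\eta\parallel\MM{B}$ are all sound and recover the paper's identification as a special case. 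Where you go beyond the paper is in probing uniqueness, and the gap you flag is real: the symmetry condition only forces $\eta\contract(\MM{B}\cdot\diff\MM{S})$ to be \emph{closed}, not zero, and even within the ansatz $\eta=f\MM{B}/\rho$ any $f$ with $\MM{B}\cdot\nabla f=0$ survives. So the proposition as literally stated overstates uniqueness, and your ``normalise $f\equiv1$'' step is a gauge choice rather than a deduction --- but this is a defect of the paper's claim, not of your argument, and your analysis is the one that actually exposes it. For the purpose the proposition serves (producing the cross-helicity conservation law), only the existence of the solution $\eta_{CH}=\rho^{-1}\MM{B}\cdot\nabla$ is needed, which both you and the paper establish.
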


\begin{proof}
\[
\hbox{Recall}\quad
\eta\contract\rho\diff{V} = \diff ({\boldsymbol\Psi}\cdot\diff{\MM{x}})
\quad\hbox{and identify}\quad
\diff ({\boldsymbol\Psi}\cdot\diff{\MM{x}})
=  \MM{B}\cdot\diff \MM{S}
\,.
\]
\end{proof}

\noindent
In this case, Noether's theorem implies the conserved quantity
\begin{eqnarray*}
  0 & = & \dd{}{t}\left\langle\dede{l}{\MM{u}},\MM{\eta}\right\rangle 
  \\
  & = & 
  \dd{}{t}\int_{\mathcal{D}}\left(\frac{1}{\rho}\dede{l}{\MM{u}}\cdot \diff{\MM{x}}\right)
\wedge\,(\eta\contract\rho\diff{V})
\\
  \hbox{By  (\ref{simulsoln1})}
  & = & 
    \dd{}{t}\int_{\mathcal{D}} \frac{1}{\rho}\dede{l}{\MM{u}}\cdot \diff{\MM{x}} 
\wedge\,\MM{B}\cdot\diff \MM{S}
 \\
  & = & 
    \dd{}{t}\int_{\mathcal{D}} \left(
    \MM{B}\cdot \frac{1}{\rho}\dede{l}{\MM{u}}
    \right)
   \diff{V}
    \,.
\end{eqnarray*}
This is the \bfi{cross helicity}, which is known to be conserved, in particular, for ideal magnetohydrodynamics (MHD) \cite{ArKh1998}. 
The vector field for the symmetry associated with conservation of cross helicity is,
\[
\eta_{CH}
=   
{\rho}^{-1}\MM{B}
\cdot\nabla
,
\] 
which represents a field of shifts along magnetic field lines. The characteristic paths of the vector field $\eta_{CH}$ are magnetic field lines that satisfy the symmetry condition (\ref{e:eta_vector_field}), as a result of the advection equation (\ref{e:eta_advected_condition}) for magnetic flux $\MM{B}\cdot\diff \MM{S}$. That is, the characteristic paths of $\eta_{CH}$ are frozen into the flow of the fluid velocity. This means that shifts along these characteristic paths are relabelling symmetries and the corresponding Noether conservation law is the cross helicity. 

\section{Other conservation laws for ideal fluids}\label{nonNoetherCLs}

In this section, we first point out that not all fluid conservation laws follow from Noether's theorem, as formulated above, by considering the counterexample of magnetic helicity for MHD. We then make a connection between the Noether's theorem discussed in this paper, and the Kelvin-Noether circulation theorem discussed in \cite{HoMaRa1998}.

\subsection{Magnetic helicity}
The distinction between advected quantities and locally conserved
quantities comes back into play, when one considers compound advected
quantities that are conserved independently of the motion
equation. For example advection of the scalar ${s}$ and the exact 2-form
$\MM{B}\cdot\diff \MM{S}=\diff(\mathbf{A}\cdot\diff \MM{x})$ lead
immediately to advection of the compound quantities,
\[
d{s}\wedge \MM{B}\cdot\diff \MM{S}
= {\rm div}({s} \MM{B}) \diff{V}
\quad\hbox{and}\quad
\mathbf{A}\cdot\diff \MM{x}\wedge \MM{B}\cdot\diff \MM{S}
=
\mathbf{A}\cdot \MM{B} \diff{V}
\,.
\]
The former is trivial, because its integral over space vanishes
identically. However, the latter is the famous \bfi{magnetic
  helicity}, whose spatial integral measures the knottedness, or
number of linkages, of the magnetic field lines. That is, the magnetic
helicity indicates the topological complexity of the winding of the
magnetic field lines amongst themselves in the spatial domain. The
preservation of this magnetic winding number is a fascinating property
of ideal MHD flows \cite{ArKh1998}, but it does not arise from a
Noether symmetry. It arises here as a compound Lagrangian quantity
whose Eulerian interpretation is deep and interesting. It is beyond
our present scope for further study, except to provide a counter
example to the conjecture that a converse of Noether's theorem might
exist for Euler-Poincar\'e ideal fluid theories. \\

\subsection{Modified vorticity, potential vorticity, helicity and the Kelvin-Noether theorem}

More general conservation laws can be obtained by expanding the set of variables, so that the time variation of the quantity $a$ is enforced by a Lagrange multiplier $b$ (known as a Clebsch variable) instead of constraining the variation $\delta{a}$. As described in
\cite{CoHoHy2007,CoHo2009a,GaBaRa2011}, the same Euler-Poincar\'e equations 
are obtained this way. In this case, Hamilton's principle becomes
\begin{equation}
S_{Clebsch}
=
\int_{t_0}^{t_1}\!\! 
l(u,a) + \left\langle b\,, 
\left(\pp{}{t} +  \mathcal{L}_u\right)a\right\rangle
dt
\,,
\label{Clebsch-Lag}
\end{equation}
with Lagrange multiplier $b$ to be determined.
Then, Hamilton's principle yields after a short calculation,
\begin{eqnarray*}
0 = \delta S_{Clebsch}
&=&
\int_{t_0}^{t_1}\!\! 
\left\langle \dede{l}{u} - b\diamond{a}\,,\,\delta{u}\right\rangle
+ \left\langle \delta b\,, 
\left(\pp{}{t} +  \mathcal{L}_u\right)a\right\rangle
\\
&&\quad 
+ \left\langle 
\dede{l}{a} 
-
\left(\pp{}{t} +  \mathcal{L}_u\right)b\,, 
\delta{a}\right\rangle\,dt
+
\Big[\Big\langle  b\,, \delta{a}\Big\rangle\Big]_{t_0}^{t_1}
\,.
\label{var1-Clebsch-Lag}
\end{eqnarray*}
We now consider symmetries of the form
\[
\delta u = \dot{\eta} + [u,\eta], \quad \delta a = 0, \quad \delta b =0,
\]
for a general relabelling vector field $\eta$ that satisfies $\dot{\eta} + [u,\eta]=0$, but is not constrained to be a symmetry of the quantity $a$. Noether's theorem then leads to
\[
\dd{}{t}\left\langle \frac{1}{\rho}\dede{l}{u},\eta\contract
\rho\diff{V}\right\rangle = 0\,.
\]
%%%%
%The below was removed because it doesn't quite match this framework.
%%%%
% Applying the infinitesimal transformations from (\ref{inf-trans}) for $\delta{u}$ and $\delta{a}$ yields
% %
% \begin{eqnarray*}
% 0 = \delta S_{Clebsch}
% &=&
% \int_{t_0}^{t_1}\!\! 
% \left\langle 
% \left(\pp{}{t} +  \mathcal{L}_u\right)
% \left(\dede{l}{u} - b\diamond{a}\right)
% ,\,\eta\right\rangle
% + \left\langle \delta b, 
% \left(\pp{}{t} +  \mathcal{L}_u\right)a\right\rangle
% \\
% &&\quad 
% + \left\langle 
% \left(
% \dede{l}{a} 
% -
% \left(\pp{}{t} +  \mathcal{L}_u\right)b
% \right)\diamond{a}
% \,, 
% \eta\right\rangle\,dt
% +
% \left[\left\langle  \dede{l}{u}\,, \eta\right\rangle\right]_{t_0}^{t_1}
% ,
% \label{var1-Clebsch-Lag}
% \end{eqnarray*}
% %
% in which, because of a cancellation, one may notice the same endpoint term as in our earlier calculation of Hamilton's principle in (\ref{HP1}).
Next, we define the $(\,\tilde{\diamond}\,)$ operation in terms of the diamond operation by 
\begin{equation}\label{tilde-diamond-def}
b \diamond a =: \left( \frac{b}{\rho}\,\tilde{\diamond}\,a\right)
\otimes\rho\diff{V}\,.
\end{equation}
The $(\,\tilde{\diamond}\,)$ operation allows one to express a 1-form density as the product of a 1-form and the advected mass density. 

After a calculation similar to that leading  to the result  (\ref{vort2form}), one may write the vanishing of the $\eta$-coefficient in the previous variational equation for $\delta S_{Clebsch}=0$ as
\begin{equation}
\dd{}{t}\oint_{\gamma_t}
\frac{1}{\rho}\, \dede{l}{u}
 - \dd{}{t}\oint_{\gamma_t}\frac{b}{\rho}\,\tilde{\diamond}\,{a}
+
\oint_{\gamma_t}
\left(\frac{1}{\rho}\,\dede{l}{a} \right)\,\tilde{\diamond}\,{a}
-
\oint_{\gamma_t}
\left(\frac{1}{\rho}\left(\partial_{t} 
+  
\mathcal{L}_u\right)b\right)\,\tilde{\diamond}\,{a}
=
0
\,,
\label{var2-Clebsch-Lag}
\end{equation}
which is found after substituting $\left(\pp{}{t} +  \mathcal{L}_u\right)a=0$, as imposed by the $\delta{b}$-variation. In the loop integrals, the closed circuit $\gamma_t$ moves with the flow of the fluid velocity vector field $u$. 

Now we have two choices. Namely, we may either eliminate Lagrange multiplier $b$ by using the variational equation for $b$,  
\begin{equation}
\left(\pp{}{t} +  \mathcal{L}_u\right)b
=
\dede{l}{a} 
\,,
\label{b-eqn}
\end{equation}
or we may keep $b$ as an additional dynamical variable satisfying equation (\ref{b-eqn}). The first choice yields the Kelvin-Noether theorem of \cite{HoMaRa1998}, and the second choice yields an advection equation for a quasi-vorticity vector field, plus the additional equation (\ref{b-eqn}) for $b$. Specifically, in the first choice, the second and fourth terms in equation (\ref{var2-Clebsch-Lag}) cancel, leaving
\begin{equation}
\dd{}{t}\oint_{\gamma_t}
\frac{1}{\rho}\, \dede{l}{u}
+
\oint_{\gamma_t}
\left(\frac{1}{\rho}\,\dede{l}{a} \right)\,\tilde{\diamond}\,{a}
=
0
\,,
\label{circ1}
\end{equation}
which is the Kelvin-Noether theorem \cite{HoMaRa1998} for circulation. 

In the second choice, the third and fourth terms in equation (\ref{var2-Clebsch-Lag}) cancel instead, thereby leaving the following circulation conservation law,
\begin{equation}
\dd{}{t}\oint_{\gamma_t}
\left( \frac{1}{\rho}\, \dede{l}{u}
 - 
 \frac{b}{\rho}\,\tilde{\diamond}\,{a}\right)
 =
 0
 \,,
\label{circ2}
\end{equation}
or, equivalently in vector notation,
\begin{equation}
\dd{}{t}\oint_{\gamma_t}
\MM{\tilde{u}} \cdot \diff{\MM{x}}
=
0
\,,
\quad\hbox{with}\quad
\MM{\tilde{u}} \cdot \diff{\MM{x}}
:=
\left( \frac{1}{\rho}\, \dede{l}{u}
 - 
 \frac{b}{\rho}\,\tilde{\diamond}\,{a}\right)
 .
\label{circ3}
\end{equation}
The price for this circulation conservation law is that the Lagrange multiplier $b$ remains and satisfies equation (\ref{b-eqn}), instead of being eliminated. However, keeping $b$ as a dynamical variable also has the added value
that doing so yields a quasi-vorticity $\MM{\tilde{\omega}}$ involving $b$ that satisfies  the advection law for a 2-form under the flow of the velocity vector field, $u$,
\begin{equation}
\left(\pp{}{t} +
  \mathcal{L}_u\right)
\left(\MM{\tilde{\omega}} \cdot \diff{\MM{S}}\right)
=0
\,,
\quad\hbox{with}\quad
\MM{\tilde{\omega}} \cdot \diff{\MM{S}} 
=
(\curl\,\MM{\tilde{u}})\cdot \diff{\MM{S}}
:=
\diff\left( \frac{1}{\rho}\, \dede{l}{u}
 - 
 \frac{b}{\rho}\,\tilde{\diamond}\,{a}\right)
 .
\label{quasi-vort2form}
\end{equation}
Moreover, the vector field $\rho^{-1}{\tilde{\omega}}$, which is derived via the relation
\[
\rho^{-1}{\tilde{\omega}}\contract \rho\diff{V}=\MM{\tilde{\omega}} \cdot \diff{\MM{S}}
\,,
\]
also satisfies the invariance equation (\ref{e:eta_vector_field}), namely,
\begin{equation}
\pp{}{t} \big(\rho^{-1}{\tilde{\omega}}\big) 
+ \Big[{u}\,,\,\rho^{-1}{\tilde{\omega}}\Big] 
= 0
\,,
\quad\hbox{which means}\quad
  \left[\pp{}{t} + \mathcal{L}_u
  \,,\,
  \mathcal{L}_{ (\rho^{-1}\tilde{\omega})} \right]
  =
  0
  \,,
\label{quasivort-invar_vector_field}
\end{equation}
as demonstrated in equation (\ref{comrel-proof}) in the proof of Theorem \ref{com-thm}.

Consequently, keeping the Lagrange multiplier $b$ as a dynamical variable produces an Ertel theorem of the form (\ref{Ertel-D/Dt}) and yields conservation laws for the corresponding potential vorticity and helicity.

Moreover, these equations apply for essentially \emph{any} fluid
theory; so keeping the Lagrange multiplier $b$ instead of eliminating
it affords a certain universality to the formulation. The
corresponding conserved potential quasi-vorticity $\tilde{q}$ and
quasi-helicity $\tilde{H}$ are defined by
\begin{equation}
\tilde{q}:= \rho^{-1}\MM{\tilde{\omega}}\cdot \nabla {s}
\quad\hbox{with}\quad
\tilde{H} 
:= \int_\mathcal{D} 
\MM{\tilde{u}} \cdot \diff{\MM{x}} \wedge
d(\MM{\tilde{u}} \cdot \diff{\MM{x}})
= \int_\mathcal{D} \MM{\tilde{\omega}} 
\cdot\,\curl^{-1}\MM{\tilde{\omega}} \diff{V}
\,,
\label{quasi-PV-helicity}
\end{equation}
in terms of the quasi-vorticity $\MM{\tilde{\omega}}$ in (\ref{quasi-vort2form}) and writing $\tilde{q}$ for an advected scalar function, $a_1={s}$, and mass density, $a_2=\rho\diff{V}$. 

\begin{example}\rm
As an example, consider the \emph{particular} case $a_1={s}$ and $a_2=\rho\diff{V}$, when an analogue of potential vorticity exists. In this case, the quasi-vorticity 2-form is given by 
\[
\MM{\tilde{\omega}} \cdot \diff{\MM{S}}  
=
\diff \left( (\MM{u}+\rho^{-1}b\,\nabla{s})\cdot \diff{\MM{x}} \right)
=
\curl\left(\MM{u}+\rho^{-1}b\,\nabla{s}\right)\cdot \diff{\MM{S}}
\,,
\]
and the Lagrange multiplier $b$ satisfies
\begin{equation*}
\left(\pp{}{t} +  \MM{u}\cdot\nabla \right)\frac{b}{\rho}
=
\frac{1}{\rho}\dede{l}{s} 
\,.
\label{b-eqn-pqv}
\end{equation*}
\end{example}

Thus, at the cost of keeping the $b$-equation as $\left(\pp{}{t} +
  \mathcal{L}_u\right)b={\delta l}/\delta{a}$, we can extend the
Kelvin, Ertel and helicity theorems for most fluid theories, but these
are Noether conservation laws for the Lagrangian corresponding to the extended
variable set $(u,a,b)$. 

\section{Conclusions} \label{sec-conclude} In this paper we showed how
to obtain conserved quantities for ideal fluid models that can be
obtained from the Euler-Poincar\'e equations with advected
quantities. The conserved quantities are obtained \emph{via} Noether's
Theorem from relabelling symmetries of the Lagrangian, which are
generated by vector fields 
that satisfy the condition 
$\eta_t-\ad_u\eta=0$.
Fluid theories usually involve advected
quantities that evolve according to the equation
$a_t+\mathcal{L}_ua=0$: an advected density is almost always present,
and other possibilities include advected scalars such as buoyancy, 
or advected 2-forms such as magnetic flux. In order to
have a symmetry of the Lagrangian it is also necessary to satisfy
equation \eqref{e:eta_advected_condition} so that the vector field $\eta$ generates a
symmetry of the advected quantity $a(t)$. In Corollary
\ref{Persistence-cor} we showed that if
\eqref{e:eta_advected_condition} is satisfied initially then it is
satisfied for all subsequent times. In general, defining a
parameterisation of the null space of $\mathcal{L}_\eta$ for $\eta$
then leads to evolution equations defined on the dual of the space of
parameterising functions. Note that this approach is different to that
of Section 4 of \cite{HoMaRa1998}, which does not use Noether's
Theorem and instead performs computations on the Euler-Poincar\'e
equations directly.

In this paper we considered fluid theories with advected density,
leading to conservation of vorticity, as well as advected density plus
advected tracers, leading to conservation of potential vorticity, and
advected density plus advected 1-forms (such as magnetic flux),
leading to global conservation of cross-helicity. There are many other
advected quantities that could be considered, notably advected tensor
fields which are used in the theory of ideal complex fluids
\cite{Ho2002}. However, since most fluid theories include an advected
density (even incompressible flow, for which the pressure is a
Lagrange multiplier that enforces that the density remain constant),
it is not always possible to simultaneously solve all of the
constraints on $\eta$ arising from the requirement that $\eta$
generates symmetries of all of the advected quantities involved. For
example, when a density and a magnetic flux are both present, there is
only one symmetry and hence only one globally conserved quantity.  An
interesting class of problems in which density is not necessarily
present arise in computational anatomy \cite{HoRaTrYo2004}. Here the
aim is to find the solution of the EPDiff equation (for which the
reduced Lagrangian is a functional of u only) which transports one
configuration of an advected quantity $a$ to another. The advected
quantity might be a scalar (for greyscale images), a singular measure
(for curves and surfaces), or even a tensor field (for diffusion
tensor images). Solutions of the EPDiff equation are geodesics on the
diffeomorphism group; for these solutions to drop to geodesics on the
shape space corresponding to the chosen advected quantities, all of
the conserved Noether quantities must vanish \cite{GaBaRa2011}. For
example, for greyscale images described by advected functions
$I(x,t)$, the momentum is constrained to be normal to the image:
$\delta{l}/\delta{u}\cdot \nabla l = 0.$ Hence, the Noether quantities
define the geometry of the shape space.

One of the ``holy grails'' in the field of variational numerical
methods is to find Eulerian discretisations of fluid dynamics that
arise from a variational principle. Amongst other things, this would
provide the possibility of discrete forms of the Noether's Theorem
described in this paper. One direction that we have previously
explored is to try to find a discretisation of the diffeomorphism
group and to obtain some form of reduction by symmetry; this approach
has been developed in some detail, making extensive use of discrete
exterior calculus, for the case of incompressible flows in
\cite{Pa+2011}.  In \cite{CoHo2009a}, it was shown that the spatial
discretisation of the Lie bracket must satisfy the closure property if
a reduction is to be obtained; it was also shown how space-time
discretisations could be obtained in this case. Another direction that
we have explored is using Clebsch constraints to enforce the evolution
of the back-to-labels map \cite{CoHoHy2007}. This leads to a
multisymplectic formulation of fluid dynamics that can be discretised
by a standard recipe but reduction (elimination of the back-to-labels
map) is not possible after discretisation due to symmetry breaking.

On the Hamiltonian side, the conserved quantities associated via
Noether's theorem with relabelling symmetry comprise the Casimir
functions. The variational derivatives of the Casimir functions are
null eigenvectors of the Lie-Poisson Hamiltonian structure that arises
from the Euler-Poincar\'e framework upon Legendre transforming. This
is explained further in \cite{HoMaRaWe1985,Ho1986,AbHo1987}. See also
\cite{PaMo1996a,PaMo1996b} for related discussions and additional
references. The conservation laws that follow from Noether's theorem
for relabelling symmetry of the Eulerian fluid variables generate
steady flows when substituted into the augmented Lie-Poisson brackets
that include the particle labels as functions of time and spatial
coordinate \cite{Ho1986,AbHo1987}.

Thus, Lie-Poisson brackets with the Casimir functions leave the
Eulerian fluid variables invariant, but they shift the fluid particle
labels along steady flows. This fact led to a strategy for proving
nonlinear stability of equilibrium flows that was first recognized in
\cite{Ar1965} for ideal incompressible planar flows and was later
developed and applied more widely in plasma physics in
\cite{HoMaRaWe1985}. Likewise, the Eulerian vector fields for
relabelling symmetries found here could just as well have been
obtained by solving for the null eigenvectors of the Lie-Poisson
bracket on the Hamiltonian side. Critical points of the sum of
Hamiltonian and the Casimirs are steady equilibrium flows. The
stability of these equilibrium flows may be studied by taking a second
variation and determining the conditions on the equilibrium that would
make the corresponding linearly conserved second variation sign
definite \cite{HoMaRaWe1985}.

\begin{remark}[Noether's other theorem]
  As mentioned earlier, Noether's original paper actually contains two
  theorems. The second one is generally regarded as the more subtle of
  the two. Noether's second theorem leads in principle to dependence
  among the Euler-Lagrange equations (Bianchi identities). However,
  for ideal fluids, we have not found any strictly Eulerian
  conservation laws in addition those already discussed here by using
  Noether's second theorem in the Euler-Poincar\'e context.
\end{remark}

\subsection*{Acknowledgments}
This paper was inspired by remarks made to CJC by Oliver B\"uhler
about Kelvin's theorem and the \cite{HoMaRa1998} paper. The authors
are also grateful to Y. Kosmann-Schwartzbach and E. L. Mansfield for
comments, encouragement and advice while we were writing of this paper,  
and the useful suggestions and corrections from the two anonymous reviewers.  
The work by DDH was partially supported by an Advanced Grant from the European Research Council. 

\bibliography{kelvin}

\end{document}